
\documentclass[12pt,draftcls,onecolumn]{IEEEtran}

\IEEEoverridecommandlockouts

\usepackage{times} 
\usepackage{color}
\usepackage{graphicx}
\usepackage{setspace}
\usepackage{bbm}
\usepackage{mathdots,mathrsfs}
\usepackage{amssymb,latexsym,amsfonts,amsmath,cite}
\usepackage{stmaryrd}
\usepackage{caption}
\usepackage[dvips]{psfrag}
\bibliographystyle{IEEEtran}
\usepackage{setspace}
\usepackage{amsmath}
\usepackage{url}
\usepackage{pgf,tikz}

\newcommand{\R}{{\mathbb{R}}}

\newcommand{\ie}{{i.e., }}

\newcommand{\h}{{\text{H}}}

\newcommand{\HT}{\mathcal{H}_{2}}

\newcommand{\Tr}{\mathbf{Tr}}

\newcommand{\RR}{\boldsymbol{\rho}_{\textbf{ss}}}

\newtheorem{definition}{\bfseries Definition}

\newtheorem{remark}{\bfseries Remark}

\newtheorem{theorem}{\bfseries Theorem}

\newtheorem{corollary}{\bfseries Corollary}

\bibliographystyle{IEEEtran}

\author{Milad Siami~and~Nader Motee
\thanks{M. Siami and N. Motee are with the Department of Mechanical Engineering and Mechanics, Packard Laboratory, Lehigh University, Bethlehem, PA, USA. Email addresses:  {\tt\small siami,motee@lehigh.edu}.
}}

\title{Scaling Laws for Disturbance Propagation in \\ Cyclic Dynamical Networks}
\begin{document}
\maketitle
\begin{abstract}
Our goal is to analyze performance  of  stable linear dynamical networks subject to external stochastic disturbances. The square of the $\mathcal H_2$-norm of the network is used as a performance measure to quantify the expected steady-state dispersion of the outputs of the network. We show that this performance measure can be tightly bounded from below and above  by some spectral functions of the state-space matrices of the network. This result is applied to a class of cyclic linear networks and shown that their performance measure scale quadratically with the network size. 
\end{abstract}

\begin{IEEEkeywords}
Cyclic Dynamical Networks;  $\mathcal{H}_2$-Norm; Fundamental Limits.
\end{IEEEkeywords}

\allowdisplaybreaks

\section{Introduction}

The class of dynamical networks with cyclic interconnection topologies has been the focus of numerous studies over the past few years. Examples of dynamical networks with cyclic structures include gene regulation networks \cite{Tyson}, metabolic pathways \cite{chandra11, buzi11, Scardovi2010,jovarcsonTAC08}, and cellular signaling pathways \cite{Kholodenko}. The results of this paper have been motivated by cyclic dynamical networks arising in biological networks. In \cite{Tyson}, the authors propose a sufficient stability condition for an unperturbed cyclic network. Later on, these results were extended to show that global asymptotic stability conditions for cyclic dynamical networks can be obtained using diagonal stability and passivity properties of subsystems in the form of some secant conditions \cite{Arcak}. 
Although, stability properties of cyclic linear and nonlinear dynamical networks have been studied in several recent works, a comprehensive performance analysis is yet to be done for this class of networks.
In various applications such cyclic dynamical networks must operate in uncertain environments such as under influence of external stochastic disturbances. Therefore, one of the  relevant challenges is to study performance of cyclic dynamical networks under external stochastic disturbances. 

 
In this paper, we first present a result on performance analysis of general stable linear systems subject to external stochastic disturbances. We employe the square of the $\mathcal{H}_2$-norm of the system from disturbance input to the output  as a performance measure \cite{Bamieh12, Doyle89}. We derive explicit tight lower and upper bounds for this performance measure that are spectral functions of the state-space matrices of the system. Then, this general result is utilized to quantify inherent fundamental limits on the performance measure of a class of cyclic linear dynamical networks. This class of cyclic networks with asymmetric structures has been used to model certain biochemical pathways \cite{Kholodenko,Scardovi2010,jovarcsonTAC08}. We particularly show that the performance measure of a cyclic linear dynamical network scales quadratically with the network size. Moreover, it is shown that when all subsystems are  identical, the network attains the best achievable performance among all cyclic networks with the same secant criterion. 
 
%

The notation used in this paper is fairly standard. Specifically, $\R$ denotes the set of real numbers, $\mathbb C$ denotes the set of complex numbers, $\mathbf{Re}\{.\}$ denotes the real part of a complex number, $(.)^{\text T}$ denotes transpose and $(.)^{\text H}$ denotes Hermitian transpose. $I_{n \times n} \in \R^{n\times n}$ is the identity matrix, $\mathbf 0_n \in \R^n$ is a vector of all zeros. For a square matrix $A$, $\Tr(A)$ refers to the summation of on-diagonal elements of $A$.  We write $\lambda_{\max}(M)$ (resp., $\lambda_{\min}(M)$) for the maximum (resp., minimum) eigenvalue of the matrix $M$, $\mathbf{diag}[v]$ for a square diagonal matrix with the elements of vector $v$ and $\|.\|_2$ for 2-norm of a vector. The eigenvalues of a symmetric matrix $Q\in \R^{n \times n}$ are indexed in ascending order $\lambda_1(Q) \leq \lambda_2 (Q)\leq \cdots \leq \lambda_n(Q)$. $\mathbf E[v]$ stands for the expected value of random variable $v$.

\allowdisplaybreaks

\section{Problem Formulation}\label{sec:H-2-Norm}

	The steady-state variance of outputs of linear systems driven by external stochastic disturbances can be regarded as a measure of performance. We consider a linear time-invariant system 
	\begin{eqnarray}
		\dot{x} &=&  Ax~+~ \xi, \label{linear-1} \\ 
		y&=&Cx, \label{linear-2} 
	\end{eqnarray}
with $x(0)=\mathbf 0_n$, where $x \in \R^{n}$ is the state and $y \in \R^{m}$ the output of the system. The input signal $\xi \in \R^{n}$ is a white noise process with zero mean and identity covariance, i.e., 
	\begin{equation}
		\mathbf E \left [\xi(t) \xi(\tau)^{\text T} \right ] ~=~ I_{n\times n}\hspace{0.03cm} \delta(t - \tau),
		\label{covariance}
	\end{equation}
where $\delta(.)$ is the delta function. It is assumed that the state matrix $A$ is Hurwitz.   
 
\begin{definition}
	The $\HT$--norm of linear system \eqref{linear-1}-\eqref{linear-2} from $\xi$ to $y$ is defined as the square root of the following quantity 
	\begin{equation}
		\RR\left (A;{Q}\right) ~:=~  \lim_{t \rightarrow \infty} \mathbf E \left [ x^{\text T}(t)Qx(t)\right ], 
		\label{bound0} %
	\end{equation}
where $Q=C^{\text T}C$. 
\end{definition}

	For unstable linear systems, the outputs of the system have finite steady state variance as along as the unstable modes of the system are not observable from the output of the system (cf. \cite{Bamieh12}). 
The value of performance measure (\ref{bound0}) for \eqref{linear-1}-\eqref{linear-2} can be quantified as 
%
	\begin{equation}
		\RR\left (A;{Q}\right)~=~ \mathbf{Tr}( P ),
		\label{ricca3}
	\end{equation}
where $P$ is the unique solution of the Lyapunov equation
	\begin{equation}
		PA~+~A^{\text T}P~+~Q~=~0.
		\label{obser}
	\end{equation} 
%

Our first goal is to calculate tight lower and upper bounds on the performance measure (\ref{bound0}) for general stable linear time-invariant systems. Our second goal is to utilize these bounds to characterize size-dependent fundamental limits on the performance measure of cyclic linear dynamical networks.

\section{The Main Result}
\label{sec:main}
Since the performance measure (\ref{bound0}) is  real-valued, we show that this measure can be sandwiched between two real-valued functions of  eigenvalues of matrices $A$ and  $Q$.  

\begin{theorem}
\label{main}
In linear system  \eqref{linear-1}-\eqref{linear-2}, suppose that  the disturbance input is a white stochastic process $\xi$ with zero-mean and covariance  (\ref{covariance}),  the state matrix $A$ is Hurwitz, and $Q=C^{\text T}C$. Then, we have 
	\begin{equation} 
		-\sum_{i=1}^{n} \frac{\lambda_{\min}(Q)}{2\mathbf{Re}\{{\lambda_i}(A)\}} ~\leq ~\RR\left (A;Q\right)
~\leq~ -\sum_{i=1}^{n} \frac {\lambda_{i}(Q)}{2{\lambda_i}(A_s)}, \label{main-ineq}
	\end{equation}
where $A_s=\frac{A^{\text T}+A}{2}$ is the systematic part of matrix $A$ and Hurwitz. Moreover, the lower bound in (\ref{main-ineq}) is achieved if and only if $A$ is normal, \ie $A^{\text T}A=AA^{\text T}$, and $Q$ has $n$ identical eigenvalues. 
\end{theorem}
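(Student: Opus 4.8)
The plan is to work directly with the Lyapunov characterization $\RR(A;Q) = \Tr(P)$ where $P \succeq 0$ solves $PA + A^{\text T}P + Q = 0$, and to extract the two bounds from two different manipulations of this equation.

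For the \emph{upper bound}, I would symmetrize the Lyapunov equation. Multiplying $PA + A^{\text T}P + Q = 0$ on nothing and simply reading off the symmetric part gives $PA + A^{\text T}P = P(A_s + A_a) + (A_s - A_a)P$ where $A_a$ is the skew part; but the cleaner route is this: since $P = P^{\text T}$, take the Lyapunov equation and note that it already reads $PA + (PA)^{\text T} + Q = 0$. I want to relate $\Tr(P)$ to $A_s$. Consider instead the surrogate equation $\tilde P A_s + A_s \tilde P + Q = 0$; because $A_s \prec 0$ (it is Hurwitz and symmetric, as asserted in the statement — this needs the brief remark that $A$ Hurwitz does not in general imply $A_s$ Hurwitz, but the theorem hypothesizes it, presumably justified in the cyclic-network application), this has a unique solution, and in a simultaneously-diagonalizing basis one computes $\Tr(\tilde P) = -\sum_i \lambda_i(Q)/(2\lambda_i(A_s))$ after pairing eigenvalues of $Q$ and $A_s$ in opposite orders via a rearrangement/Ruhe-type trace inequality. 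The key step is then to show $\Tr(P) \le \Tr(\tilde P)$. I expect this to follow from a monotonicity argument: writing the difference $D = \tilde P - P$, subtracting the two Lyapunov equations yields an equation whose solution is a Gramian-type integral of a positive semidefinite quantity, giving $D \succeq 0$; alternatively, integrate $\Tr(P) = \int_0^\infty \Tr(e^{A^{\text T}t} Q e^{At})\,dt$ against $\Tr(\tilde P) = \int_0^\infty \Tr(e^{A_s t} Q e^{A_s t})\,dt$ and compare integrands using $\|e^{At}v\| \le \|e^{A_s t}v\|$ (logarithmic-norm / Coppel's inequality).

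For the \emph{lower bound}, I would instead use $\RR(A;Q) = \int_0^\infty \Tr\big(e^{A^{\text T}t} Q\, e^{At}\big)\,dt \ge \lambda_{\min}(Q)\int_0^\infty \Tr\big(e^{A^{\text T}t} e^{At}\big)\,dt = \lambda_{\min}(Q)\,\Tr(W)$, where $W$ solves $WA + A^{\text T}W + I = 0$. Then I must show $\Tr(W) \ge -\sum_i 1/(2\mathbf{Re}\{\lambda_i(A)\})$. This is the spectral inequality: $\Tr(W) = \sum_i \mu_i(W)$ where $W \succ 0$, and bounding the trace of the solution of the $I$-forced Lyapunov equation from below by the eigenvalues of $A$ is exactly the content of a known result (Schur-triangularization of $A$, then the diagonal of the Lyapunov equation forces $-2\mathbf{Re}\{\lambda_i(A)\} w_{ii} = 1 + (\text{nonnegative off-diagonal contributions})$, so $w_{ii} \ge -1/(2\mathbf{Re}\{\lambda_i(A)\})$, and sum over $i$). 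Assembling, $\RR(A;Q) \ge \lambda_{\min}(Q)\,\Tr(W) \ge -\sum_i \lambda_{\min}(Q)/(2\mathbf{Re}\{\lambda_i(A)\})$.

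Finally, for the \emph{equality characterization}, I would trace back the two places slack was introduced in the lower bound: equality in $\Tr(e^{A^{\text T}t}Qe^{At}) \ge \lambda_{\min}(Q)\Tr(e^{A^{\text T}t}e^{At})$ for all $t$ forces $Q = \lambda_{\min}(Q) I$ (using that $e^{At}$ has full rank), i.e. $Q$ has $n$ equal eigenvalues; and equality in the Schur-diagonal estimate $w_{ii} \ge -1/(2\mathbf{Re}\{\lambda_i(A)\})$ for every $i$ forces the strictly-upper-triangular part of the Schur form to vanish, i.e. $A$ is normal. Conversely, if $A$ is normal and $Q = cI$, diagonalize $A$ unitarily and compute $\Tr(P)$ explicitly to see both sides of the left inequality coincide. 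The main obstacle I anticipate is the upper bound — specifically making the comparison $\Tr(P) \le \Tr(\tilde P)$ rigorous and clean; the integral-comparison route via Coppel's inequality $\|e^{At}\| \le e^{\mu(A)t}$-type bounds applied entrywise (or rather to $v^{\text T}e^{A^{\text T}t}e^{At}v$, whose derivative is $v^{\text T}e^{A^{\text T}t}(A+A^{\text T})e^{At}v = 2\,(\text{quadratic form in }A_s)$) is the cleanest and is the step I would write out most carefully.
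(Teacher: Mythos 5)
Your treatment of the lower bound and of the equality case is essentially the paper's own argument: pull $\lambda_{\min}(Q)$ out of the trace, reduce to $\Tr(W)$ with $WA+A^{\text T}W+I=0$, Schur--triangularize $A$, and observe that the $i$-th column of $e^{Tt}$ (with $T$ upper triangular) has norm at least $|e^{\lambda_i(A)t}|$, with equality for all $t$ exactly when the strictly upper part vanishes, i.e.\ when $A$ is normal. The paper phrases this as $e^{(\Gamma+N)t}=e^{\Gamma t}+M_t$ with $M_t$ strictly upper triangular and the cross terms traceless. Your ``diagonal of the Lyapunov equation forces $-2\mathbf{Re}\{\lambda_i\}w_{ii}=1+(\text{nonnegative contributions})$'' is best made rigorous through exactly this integral representation, because the diagonal entries of the triangularized Lyapunov equation do not decouple algebraically; but the idea is the same and the equality bookkeeping ($Q=cI$ and $A$ normal) matches the paper.

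The genuine gap is in the upper bound, which the paper does not prove at all but imports from \cite[Corollary~2.1.1]{Komaroff92}. Both mechanisms you propose for the key comparison $\Tr(P)\le\Tr(\tilde P)$ fail. First, subtracting the two Lyapunov equations gives $DA_s+A_sD=PA_a+(PA_a)^{\text T}$, where $D=\tilde P-P$ and $A_a$ is the skew part of $A$; this forcing term is symmetric but not sign-definite, so the Gramian integral for $D$ does not yield $D\succeq 0$, and taking traces only gives $\Tr(DA_s)=0$, which says nothing about the sign of $\Tr(D)$. Second, the pointwise inequality $\|e^{At}v\|\le\|e^{A_st}v\|$ is false in general: Coppel/logarithmic-norm arguments control only the operator norm, $\|e^{At}\|\le e^{\lambda_{\max}(A_s)t}$, and a vector initially aligned with the fast eigendirection of $A_s$ is rotated by $A_a$ into slower directions, so that $\|e^{At}v\|\gg\|e^{A_st}v\|$ for large $t$; that route yields only the weaker bound $-\Tr(Q)/(2\lambda_{\max}(A_s))$. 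As written, therefore, the right-hand inequality of (\ref{main-ineq}) is not established; you would need either to reproduce Komaroff's actual argument (which starts from $\Tr(Q)=-2\Tr(PA_s)$ and applies eigenvalue trace inequalities directly to the pair $P$, $A_s$) or to cite it, as the paper does. One point in your favor: you correctly flag that $A$ Hurwitz does not imply $A_s\prec 0$, so the statement's assertion that $A_s$ is Hurwitz is really an additional hypothesis --- without it the right-hand side of (\ref{main-ineq}) can even be negative --- and the paper glosses over this.
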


\begin{proof} Every symmetric matrix $Q$ can be decomposed as $Q=UDU^{\text T}$ where $UU^{\text T}=U^{\text T}U=I$ and $D=\mathbf{diag}\left [\lambda_1(Q), \cdots \lambda_n(Q)\right ]$. Using this fact, we can rewrite (\ref{obser}) in the following form 
%
	\begin{equation}
		\bar A^{\text T} \bar P~+~ \bar P\bar A~+~D~=~0,
		\label{lya}
	\end{equation}
where $\bar A= U^{\text T}AU$ and $\bar P= U^{\text T}PU $.
Since $A$ is Hurwitz, all its eigenvalues  have strictly negative real parts. Therefore, the unique solution of (\ref{lya}) can be expressed in the following closed form
	\begin{equation}
		\bar P~=~\int_{0}^{\infty} e^{\bar A^{\text T}t} D e^{\bar At}dt.
		\label{ppp}
	\end{equation}
According to Schur decomposition theorem \cite{hornJohnson90}, there exists a unitary matrix $V \in \mathbb C ^{n \times n}$ such that $\bar A=V(\Gamma+N)V^{\h}$ where $\Gamma=\mathbf{diag}\left[\lambda_1(A),\cdots,\lambda_n(A)\right]$, $N$ is strictly upper triangular, and $V^{\h}$ is the conjugate transpose of $V$.
Next, let us consider the integrand of (\ref{ppp})
	\begin{eqnarray}
		\mathbf{Tr}(e^{\bar A^{\text T}t} D e^{\bar At})&=&\mathbf{Tr}(e^{\bar A^{\h}t} D e^{\bar At}) \nonumber \\
		&=&\mathbf{Tr}(e^{(\Gamma^{\h}+N^{\h})t}V^{\h}DVe^{(\Gamma+N)t}V^{\h}V) \nonumber \\
		&=&\mathbf{Tr}(V^{\h}DVe^{(\Gamma^{\h}+N^{\h})t}e^{(\Gamma+N)t}) \nonumber \\
		&=&\mathbf{Tr}(DVe^{(\Gamma^{\h}+N^{\h})t}e^{(\Gamma+N)t}V^{\h})  \nonumber \\
		&\geq& \lambda_{\min}(Q) \mathbf{Tr}(Ve^{(\Gamma^{\h}+N^{\h})t}e^{(\Gamma+N)t}V^{\h}) \nonumber \\
		&=&\lambda_{\min}(Q) \mathbf{Tr}(e^{(\Gamma^{\h}+N^{\h})t}e^{(\Gamma+N)t}). 
		\label{pppp224}
	\end{eqnarray}
Furthermore, we have
	\begin{eqnarray}
		e^{(\Gamma+N)t} & = & e^{\Gamma t} +  M_t,\label{pppq111} \\
		e^{(\Gamma^{\h}+N^{\h})t} & = & e^{\Gamma^{\h}t} + M_t^{\h}, \label{pppq222}
	\end{eqnarray}
where  $M_t$ is an upper-triangular Nilpotent matrix. From \eqref{pppq111} and (\ref{pppq222}), we have
	\begin{eqnarray}
		\mathbf{Tr}(e^{(\Gamma^{\h}+N^{\h})t}e^{(\Gamma+N)t})& =& \mathbf{Tr}(e^{\Gamma t}e^{\Gamma^{\h}t} +  M_t M_t^{\h})  \nonumber \\
		&\geq& \mathbf{Tr}(e^{(\Gamma^{\h}+\Gamma)t}).
		\label{pppp222}
	\end{eqnarray}
From (\ref{pppp224}) and (\ref{pppp222}), it follows that 
	\begin{eqnarray}
		\mathbf{Tr}(e^{\bar A^{\text T}t} D e^{\bar At}) &\geq& \lambda_{\min}(Q) \mathbf{Tr}(Ve^{(\Gamma^{\h}+N^{\h})t}e^{(\Gamma+N)t}V^{\h}) 		\nonumber \\
		&\geq& \lambda_{\min}(Q) \mathbf{Tr}(e^{(\Gamma^{\h}+\Gamma)t}) \nonumber\\
		&=&\lambda_{\min}(Q) \mathbf{Tr}(e^{2\mathbf{Re}\{{\Gamma}\}t}).
		\label{p12ppp}
	\end{eqnarray}
Since $\mathbf{Re}\{\lambda_i(A)\} \neq 0$ for all $i=1,\ldots,n$, from (\ref{ppp}) and (\ref{p12ppp}) we have 
	\begin{eqnarray}
		\mathbf{Tr}(P)&=&\mathbf{Tr}(\bar P) \nonumber \\
		& = & \int_0^{\infty}\mathbf{Tr}(e^{\bar A^{\text T}t}De^{\bar At})dt \nonumber \\
		&\geq& -\sum_{i=1}^{n} ~\frac{\lambda_{\min}(Q)}{2 \mathbf{Re} \{ \lambda_i(A) \}}.
 		\label{ppp2}
	\end{eqnarray}
In the last inequality, we apply the fact that the trace and sum operators are linear and they can commute with the integral. Note that the lower bound is achieved if and only if equalities  in (\ref{p12ppp}) and (\ref{pppp222}) hold, or equivalently, when $Q$ has $n$ identical eigenvalues and $A$ is a normal matrix, \ie $A^{\text T}A=AA^{\text T}$.
In order to prove the RHS inequality in (\ref{main-ineq}), we use  \cite[Corollary $2.1.1$]{Komaroff92}, which gives us the upper bound in (\ref{main-ineq}).
\end{proof}

\begin{remark}
	If $Q=q I_{n \times n}$ for $q >0$, then the lower bound in Theorem \ref{main} is tighter than the lower bounds reported in reference papers  \cite{1103913,377,1103858,1104370}. 
\end{remark}

%

The following corollary explores one special case and shows how the performance measure (\ref{bound0}) depends on the general properties of $A$.

\begin{corollary}
\label{coro-1}
	Suppose that the assumptions of Theorem \ref{main} hold. Furthermore, if we assume that matrix $A$ is normal and matrix $Q$ has $n$  identical  eigenvalues, i.e., $\lambda(Q)=\lambda_i(Q)=\ldots=\lambda_n(Q)$, then  
	\begin{eqnarray} 
		\RR\left (A;Q \right)&=& -\sum_{i=1}^{n} \frac {\lambda(Q)}{2{\lambda_i}(A_s)},
		\label{rer}
	\end{eqnarray}
\end{corollary}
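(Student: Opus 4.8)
The plan is to read \eqref{rer} off Theorem~\ref{main} directly: under the stated hypotheses the lower and upper bounds in \eqref{main-ineq} coincide, so they pin down $\RR(A;Q)$ exactly. The first step is bookkeeping on the eigenvalues of $Q$: if $Q$ has $n$ identical eigenvalues then $\lambda_{\min}(Q)=\lambda_i(Q)=\lambda(Q)$ for every $i$, so the left-hand bound in \eqref{main-ineq} becomes $-\sum_{i=1}^{n}\lambda(Q)/(2\mathbf{Re}\{\lambda_i(A)\})$ and the right-hand bound becomes $-\sum_{i=1}^{n}\lambda(Q)/(2\lambda_i(A_s))$.

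The second step is the spectral fact that for a normal matrix $A$ the eigenvalues of the symmetric part $A_s=(A^{\text T}+A)/2$ are precisely the real parts of the eigenvalues of $A$. Normality provides a unitary $U$ with $A=U\Gamma U^{\h}$ and $\Gamma=\mathbf{diag}[\lambda_1(A),\dots,\lambda_n(A)]$; then $A_s=U\,\mathbf{Re}\{\Gamma\}\,U^{\h}$, so the multiset $\{\lambda_i(A_s)\}_{i=1}^{n}$ equals $\{\mathbf{Re}\{\lambda_i(A)\}\}_{i=1}^{n}$. Since both bounds in \eqref{main-ineq} are unordered sums over $i=1,\dots,n$, they are therefore numerically equal, and the squeeze forces \eqref{rer}. (Equivalently, one may simply invoke the equality clause of Theorem~\ref{main}, which says the lower bound is attained exactly under these hypotheses, and then rewrite that lower bound using $\lambda_i(A_s)=\mathbf{Re}\{\lambda_i(A)\}$.)

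As an alternative derivation, and as a consistency check, I would do the Lyapunov computation by hand: normality gives $A^{\text T}A=AA^{\text T}$, hence $e^{A^{\text T}t}e^{At}=e^{(A^{\text T}+A)t}=e^{2A_s t}$, and with $Q=\lambda(Q)\,I_{n\times n}$ the closed form $P=\int_{0}^{\infty}e^{A^{\text T}t}Q\,e^{At}\,dt$ collapses to $P=\lambda(Q)\int_{0}^{\infty}e^{2A_s t}\,dt=-\tfrac{1}{2}\lambda(Q)A_s^{-1}$, where convergence of the integral and invertibility of $A_s$ both follow from $A_s$ being Hurwitz (already noted in Theorem~\ref{main}). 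Taking the trace and using $\mathbf{Tr}(A_s^{-1})=\sum_{i=1}^{n}1/\lambda_i(A_s)$ recovers \eqref{rer}.

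The proof is essentially routine once Theorem~\ref{main} is available; the only point requiring a little care is the identity $\lambda_i(A_s)=\mathbf{Re}\{\lambda_i(A)\}$ for normal $A$, i.e.\ checking that the two bounds agree as multisets rather than term-by-term under some fixed indexing convention — but since both sides are sums over all $i$ this causes no difficulty, and I do not anticipate any genuine obstacle.
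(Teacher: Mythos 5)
Your proof is correct and follows essentially the same route as the paper: both hinge on the fact that normality of $A$ gives $\{\lambda_i(A_s)\}_{i=1}^n=\{\mathbf{Re}\{\lambda_i(A)\}\}_{i=1}^n$ (via the unitary diagonalization $A=V\Gamma V^{\text H}$), so with all eigenvalues of $Q$ equal the lower and upper bounds of Theorem~\ref{main} coincide and squeeze $\RR\left(A;Q\right)$ to the stated value. Your alternative direct computation $P=-\tfrac{1}{2}\lambda(Q)A_s^{-1}$ from the Lyapunov integral is a sound consistency check not present in the paper, but the main argument is the same.
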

where $A_s=\frac{A^{\text T}+A}{2}$ is the systematic part of matrix $A$. 
\begin{proof}
	According to the Schur decomposition for normal matrices, there exists a unitary $V \in \mathbb C ^{n \times n}$, such that $A=V \Gamma V^{\text H}$ where $\Gamma= \mathbf{diag}\{\lambda_1(A),\ldots(A), \lambda_n(A)\}$ and $V^{\text H}$ denotes the conjugate transpose of matrix $V$. Using this we have
	\begin{eqnarray}
		A_s & = & \frac{A+A^{\text H}}{2}\nonumber \\
			& = & V\left(\frac{\Gamma+\Gamma^{\text H}}{2}\right)V^{\text H}\nonumber \\
			& = & V \mathbf{diag} \big(\mathbf{Re}\{\lambda_1(A)\},\ldots,\mathbf{Re}\{\lambda_n(A)\} \big)V^{\text H}.
		\label{normal-eq}
	\end{eqnarray}
This implies that $\lambda_{i}(A_{s})=\mathbf{Re}\{\lambda_i(A)\}$ for all $i=1,\ldots,n$. Thus, the lower and upper bounds in (\ref{main-ineq}) coincide and therefore, we get equality (\ref{rer}).
\end{proof}

All symmetric, skew-symmetric, and orthogonal matrices are examples of normal matrices. According to Corollary \ref{coro-1}, when $A$ is normal, the performance measure (\ref{bound0})  can be exactly equal to the upper bound and calculated as a function of eigenvalues of the symmetric part of $A$.  This should be differentiated from the result of Theorem \ref{main}, where it is shown that the lower bound is attained if and only if $A$ is normal and all eigenvalues of $Q$ are identical.

\section{Cyclic Linear Dynamical Networks}\label{sec:cyclic-network}
	\begin{figure}[t]
        \centering
        \begin{tikzpicture}
        \draw [ thick] (-2.5,0) circle [radius=0.4];
        \draw [ thick] (-1,0) circle [radius=0.4];
        \draw [ thick] (.5,0) circle [radius=0.4];
        \draw [  thick] (2.5,0) circle [radius=0.4];
        \draw [->, thick] (-2,0) -- (-1.5,0);
        \draw [->,  thick] (-.5,0) -- (0,0);
        \draw [fill] (1.3,0) circle [radius=.04];
        \draw [fill] (1.5,0) circle [radius=.04];
        \draw [fill] (1.7,0) circle [radius=.04];
        \draw [ thick, dashed] (3,0) -- (3.5,0);
        \draw [ thick, dashed] (3.5,0) -- (3.5,.8);
        \draw [ thick, dashed] (3.5,.8) -- (-3.5,.8);
        \draw [ thick, dashed] (-3.5,.8) -- (-3.5,0);
        \draw [ ->,thick, dashed] (-3.5,0) -- (-3,0);
        \node[] at (-2.5,0) {$x_1$};
        \node[] at (-1,0) {$x_2$};
        \node[] at (.5,0) {$x_3$};
        \node[] at (2.5,0) {$x_n$};
        \draw [ ->,thick] (-2.5,-.8) -- (-2.5,-.5);
        \draw [ ->,thick] (-1,-.8) -- (-1,-.5);
        \draw [ ->,thick] (.5,-.8) -- (.5,-.5);
        \draw [ ->,thick] (2.5,-.8) -- (2.5,-.5);
        \node[] at (-2.2,-.8) {$\xi_1$};
        \node[] at (-.7,-.8) {$\xi_2$};
        \node[] at (.8,-.8) {$\xi_3$};
        \node[] at (2.8,-.8) {$\xi_n$};
	   \end{tikzpicture}
  	   \caption{{\small Schematic diagram of negative feedback noisy cyclic system. The dashed link indicates a negative (inhibitory) feedback signal.}}
  	   \label{fig_2}
	\end{figure}
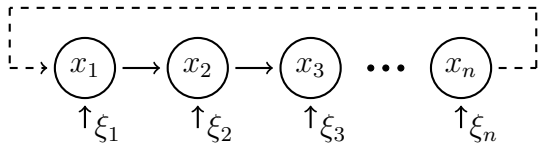
We apply our main result to a nontrivial and interesting class of linear systems with cyclic interconnection topologies. An example of a cyclic network is an autocatalytic pathway in biology with ring topology, which consists of a sequence of biochemical reactions where the system's product (output) is necessary to power and catalyze its own function  \cite{Siami13siam,Siami13acc}. We consider a cyclic linear dynamical network consists of a group of linear-time invariant systems $\mathcal S_i$ with state-space representations
	\begin{eqnarray}
	\mathcal S_i:\begin{cases}
		\dot x_i ~ = ~ -a_i x_i + u_i + \xi_{i}, \label{cyclic-1} \\
		v_i ~ = ~ c_ix_i, \label{cyclic-2}
		\end{cases}
	\end{eqnarray}
for $i=1,\ldots,n$, where $a_i$ and $c_i$ are strictly positive constants. The scalar quantities $u_i, v_i, x_i$ are the input, output and state variables of subsystem $\mathcal{S}_{i}$, respectively. By considering series interconnection of subsystems $\mathcal S_i$ for $i=1,2,\cdots,n$ and applying the output of subsystem $i$ as the input of subsystem $i+1$ (see Fig. \ref{fig_2}), we can represent the dynamics of the overall cyclic network as follows
	\begin{equation}
		\mathcal N: \begin{cases}
		\dot x_1 ~=~ -a_1 x_1 - v_n+\xi_1,\\ 
		\dot x_2 ~=~ -a_2 x_2 + v_1+\xi_2, \\
		~~~~~\vdots~ \\
		\dot x_n ~=~ -a_n x_n + v_{n-1}+\xi_n,\\
		\end{cases}	
		\label{noisy-system}
	\end{equation}
where $\xi_i$ for $i=1,2,\cdots,n$ are independent white stochastic processes with identical statistics. The resulting dynamical system takes the canonical compact form  \eqref{linear-1}-\eqref{linear-2} with state matrix 
	\begin{eqnarray}
		A=\left[\begin{array}{ccccc}
		-a_1 & 0    & \ldots & 0 & -c_n \\
		c_1  & -a_2 & \ldots & 0 & 0    \\
		\vdots & \vdots  & \ddots &\vdots  & \vdots   \\
		0 & 0 & \ldots & -a_{n-1} & 0 \\
		0 & 0 & \ldots & c_{n-1} & -a_{n} 
		\end{array}\right] \label{state-matrix}
	\end{eqnarray}
and output matrix $C=I_{n\times n}$. Our goal is to investigate robustness properties of the cyclic linear dynamical network (\ref{noisy-system}) driven by external white stochastic disturbances.  
\begin{theorem} \label{th-1}
	For the cyclic linear dynamical network  (\ref{noisy-system}) driven by a zero-mean white stochastic process with covariance (\ref{covariance}), let us define 
	\begin{eqnarray*}
		\mathfrak{a} & := & \sqrt[n]{a_1a_2 \cdots a_n}\label{constant-1}, \\
		\mathfrak{c} & := & \sqrt[n]{c_1c_2 \cdots c_n }\label{constant-2}.
	\end{eqnarray*}
If $\gamma > \cos(\frac{\pi}{n})$ where $\gamma=\frac{\mathfrak{a}}{\mathfrak{c}}$, then the cyclic linear dynamical network is stable. Moreover, if we assume that $\mathfrak{a}=a_1=\cdots=a_n$, then
	\begin{eqnarray}
		\RR\left (A;I_{n\times n}\right) &\geq&-\small{ \sum_{i=1}^{n} \frac{1}{2\mathbf{Re}\{{\lambda_i}(A)\}}}\nonumber \\
		&=&
		 \small{\left\{ \begin{array}{rcl}
     		\frac{n\tan \frac{\beta}{2}}{2 \mathfrak{c} \sin {\frac{\beta}{n}} }&,&\gamma < 1 \\
     		& & \\
     		\frac{n^2}{4 \mathfrak{c} }&,&\gamma = 1\\
      	& & \\
     		\frac{n\tanh \frac{{\beta}}{2}}{2 \mathfrak{c} \sinh {\frac{\beta}{n}} }&,&\gamma > 1 
		\label{eq7}
		\end{array} \right.}
		\label{h2}
	\end{eqnarray}    
where 
	\begin{eqnarray}
		\beta := \left\{ \begin{array}{rcl}
      	\mathrm{arcos}(\gamma) n&,&\gamma \leq 1 \\
     		\mathrm{arcosh}(\gamma) n&, &\gamma > 1
		\label{eq77}
		\end{array} \right..
		\label{beta}
	\end{eqnarray}   
{The equality in (\ref{h2}) is achieved if and only if $c_{1}=\ldots=c_{n}$, which means that all subsystems $\mathcal S_i$ for $i=1,\ldots,n$ are identical.}
\end{theorem}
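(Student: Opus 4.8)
The plan is to prove the three assertions in turn: the stability criterion, the $\HT$ lower bound, and the closed-form evaluation of the eigenvalue sum, with the equality characterization extracted along the way.

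First, for stability, I would compute the characteristic polynomial of the state matrix $A$ in \eqref{state-matrix} by cofactor expansion along the first row; the almost-triangular structure gives $\det(\lambda I - A) = \prod_{i=1}^{n}(\lambda + a_i) + \prod_{i=1}^{n}c_i$. The hypothesis $\gamma > \cos(\tfrac{\pi}{n})$ is exactly the classical secant criterion for the loop gain $\prod_i \tfrac{c_i}{s+a_i}$ under negative feedback (its DC magnitude is $(\mathfrak{c}/\mathfrak{a})^n=\gamma^{-n}$), so in the general case I would invoke \cite{Tyson,Arcak}; under the additional hypothesis $a_1=\cdots=a_n=\mathfrak{a}$ it also follows directly from the explicit eigenvalues derived below.

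Second, since $Q=C^{\text T}C = I_{n\times n}$ has $\lambda_{\min}(Q)=1$, the left inequality of Theorem~\ref{main} immediately yields $\RR(A;I_{n\times n}) \ge -\sum_{i=1}^{n}\tfrac{1}{2\mathbf{Re}\{\lambda_i(A)\}}$, and Theorem~\ref{main} further states that equality holds iff $A$ is normal (the condition that $Q$ have $n$ identical eigenvalues being automatic). So I would show that, with $a_1=\cdots=a_n$, the matrix $A$ is normal iff $c_1=\cdots=c_n$: writing $A=-\mathfrak{a}I + B$ with $B$ the weighted cyclic shift (entries $B_{i+1,i}=c_i$ and $B_{1,n}=-c_n$), normality of $A$ is equivalent to normality of $B$, and a one-line computation gives $BB^{\text T}=\mathbf{diag}(c_n^2,c_1^2,\ldots,c_{n-1}^2)$ and $B^{\text T}B=\mathbf{diag}(c_1^2,\ldots,c_{n-1}^2,c_n^2)$; these coincide iff $c_1^2=\cdots=c_n^2$, i.e. iff all $c_i$ are equal (as $c_i>0$). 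This is the stated equality condition, since the $a_i$ are already assumed equal.

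Third, for the evaluation: with $a_1=\cdots=a_n=\mathfrak{a}$ the characteristic equation becomes $(\lambda+\mathfrak{a})^n=-\mathfrak{c}^n$, so the eigenvalues are the $n$ distinct points $\lambda_k = -\mathfrak{a} + \mathfrak{c}\,e^{i\theta_k}$ with $\theta_k = \tfrac{(2k+1)\pi}{n}$, $k=0,\ldots,n-1$, and $\mathbf{Re}\{\lambda_k\} = -\mathfrak{c}(\gamma - \cos\theta_k) < 0$ precisely when $\gamma > \cos(\tfrac{\pi}{n})$. Hence $-\sum_k \tfrac{1}{2\mathbf{Re}\{\lambda_k\}} = \tfrac{1}{2\mathfrak{c}} \sum_{k=0}^{n-1}\tfrac{1}{\gamma - \cos\theta_k}$, and I would evaluate the trigonometric sum by partial fractions over the roots of $w^n+1$: with $w=e^{i\theta}$ one has $\gamma-\cos\theta = -(w-r_1)(w-r_2)/(2w)$ where $r_1 r_2 = 1$, $r_1+r_2=2\gamma$, and using $\sum_k (w_k - r)^{-1} = -n r^{n-1}/(r^n+1)$ for the $n$-th roots $w_k$ of $-1$, one gets $\sum_{k=0}^{n-1}\tfrac{1}{\gamma-\cos\theta_k} = \tfrac{2n}{r_1-r_2}\cdot\tfrac{r_1^n-r_2^n}{r_1^n+r_2^n+2}$. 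Substituting $r_1=e^{i\phi}$, $\cos\phi=\gamma$ when $\gamma<1$ (resp. $r_1=e^{\mu}$, $\cosh\mu=\gamma$ when $\gamma>1$) and simplifying with double-angle identities turns this into $\tfrac{n\tan(\beta/2)}{\sin(\beta/n)}$ (resp. $\tfrac{n\tanh(\beta/2)}{\sinh(\beta/n)}$) with $\beta$ as in \eqref{beta}; the borderline $\gamma=1$ case follows by letting $\beta\to 0$, or directly from $\sum_{k=0}^{n-1}\big(\sin\tfrac{(2k+1)\pi}{2n}\big)^{-2} = n^2$. Dividing by $2\mathfrak{c}$ produces \eqref{h2}.

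The main obstacle is the third step: converting the raw sum $\sum_k (\gamma-\cos\theta_k)^{-1}$ into the clean $\tan/\tanh$ closed forms via the partial-fraction-over-roots-of-unity identity, and correctly tracking the three regimes $\gamma<1$, $\gamma=1$, $\gamma>1$. Everything else is bookkeeping on top of Theorem~\ref{main} and the elementary normality computation.
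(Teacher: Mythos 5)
Your proposal is correct, and on the decisive computational step it takes a genuinely different route from the paper. The overall skeleton is the same: invoke the secant criterion for stability, apply the lower bound of Theorem~\ref{main} with $Q=I_{n\times n}$ (so $\lambda_{\min}(Q)=1$ and the identical-eigenvalue condition is automatic), compute the eigenvalues $\lambda_k=-\mathfrak{a}+\mathfrak{c}\,e^{i(2k+1)\pi/n}$ from $(\lambda+\mathfrak{a})^n+c_1\cdots c_n=0$, and reduce everything to the sum $\frac{1}{2\mathfrak{c}}\sum_k\big(\gamma-\cos\frac{(2k+1)\pi}{n}\big)^{-1}$. Where you diverge is in evaluating that sum: the paper rewrites each term via a cosecant product identity and then appeals to ``the Birkhoff Ergodic theorem'' to close the computation --- a step it does not justify --- whereas you use the logarithmic-derivative identity $\sum_k(w_k-r)^{-1}=-nr^{n-1}/(r^n+1)$ over the roots of $w^n+1$ together with the factorization $\gamma-\cos\theta=-(w-r_1)(w-r_2)/(2w)$, $r_1r_2=1$, $r_1+r_2=2\gamma$. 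This yields $\frac{2n}{r_1-r_2}\cdot\frac{r_1^n-r_2^n}{r_1^n+r_2^n+2}$ exactly, and the substitutions $r_1=e^{i\beta/n}$ (resp.\ $e^{\beta/n}$) recover the $\tan/\tanh$ forms; I checked the algebra and it gives precisely the right-hand side of (\ref{h2}), including the $\gamma=1$ case via $\sum_k\csc^2\frac{(2k+1)\pi}{2n}=n^2$. Your route is fully elementary, self-contained, and handles all three regimes uniformly, which is an improvement over the paper's argument. You also make explicit the normality characterization ($BB^{\text T}$ and $B^{\text T}B$ are cyclically shifted diagonal matrices, equal iff all $c_i$ coincide) that the paper merely asserts from ``the cyclic structure.''
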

\begin{proof}
	The stability condition $\gamma > \cos(\frac{\pi}{n})$ implies that $A$ is Hurwitz. Therefore, the $\mathcal H_2$--norm squared is finite and given by $\mathbf{Tr}(P)$ (see \cite{Doyle89} for more details), where $P$ is the unique positive definite solution of the Lyapunov equation 
	\begin{equation}
		AP~+~PA^{\text T}~=~-I_{n\times n}.
		\label{lya-1}
	\end{equation}
When $\mathfrak{a}=a_1=a_2=\cdots=a_n$, it is straightforward to verify that the characteristic equation of $A$ is given by 
\[(\lambda+\mathfrak{a})^n~+~c_1c_2 \cdots c_n~=~0.\] 
Therefore, the eigenvalues of the matrix are 
\[\lambda_{k}~=~-\mathfrak{a} \hspace{0.05cm}+\hspace{0.05cm}  \mathfrak{c} \hspace{0.05cm} e^{i(\frac{\pi}{n}+\frac{2\pi k}{n})},\] 
for $k=0,1,\cdots, n-1$. By substituting these eigenvalues into the lower bound of (\ref{main-ineq}), we get
	\begin{eqnarray}
		\hspace{-1cm}  -\sum_{i=1}^{n} \frac{1}{2\mathbf{Re}\{{\lambda_i}(A)\}}&=&  \sum_{k=0}^{n-1} ~\frac{1}{2 \mathbf{Re} \left\{-\mathfrak{a}+ 		\mathfrak{c} \hspace{0.05cm} e^{i(\frac{\pi}{n}+\frac{2\pi k}{n})} \right\}} \nonumber \\ 
		&=& \sum_{k=0}^{n-1} \frac{1}{2\mathfrak{c} \left( \gamma - \cos(\frac{\pi}{n}+\frac{2\pi k}{n}) \right )}.
		\label{eq23}
	\end{eqnarray}
First, let us assume that $\gamma<1$ and substitute  $\gamma=\text{cos}({\frac{\beta}{n}})$ in (\ref{eq23}). It follows that
	\begin{eqnarray}
		&& \hspace{-1.3cm} -\sum_{i=1}^{n} \frac{1}{2\mathbf{Re}\{{\lambda_i}(A)\}} ~=~   \frac{1}{2\mathfrak{c} }\sum_{k=0}^{n-1} \frac{1}{ \cos( {\frac{\beta}{n}}) - \cos(\frac{\pi}{n}+\frac{2\pi k}{n}) } \nonumber \\
		&& \hspace{0cm} = ~\frac{1}{4\mathfrak{c} }\sum_{k=0}^{n-1} \csc(\textstyle{\frac{(2k+1) \pi }{2n}+\frac{\beta}{2n}})\csc(\textstyle{\frac{(2k+1) \pi }{2n}-\frac{\beta}{2n}})\nonumber \\
		&&= ~\frac{n\tan \frac{\beta }{2}}{2 \mathfrak{c} \sin {\frac{\beta}{n}} },  \nonumber
	\end{eqnarray}
where the Birkhoff Ergodic theorem is used to conclude the last equation. Similar steps can be taken when $\gamma > 1$. In each case by substituting $\gamma$ from (\ref{eq77}) in (\ref{eq23}), one can obtain the desired result in the RHS of (\ref{h2}). According to Theorem \ref{main}, the equality in (\ref{h2}) is achieved if and only if $A$ is a normal matrix. On the other hand, based on the cyclic structure of matrix (\ref{state-matrix}) and the fact that $a_1=\ldots=a_n$, we conclude that $A$ is normal if and only if $c_1=\ldots=c_n$.
\end{proof}

	The classical secant criterion reported in \cite{Arcak} and \cite{Tyson } for cyclic linear dynamical network (\ref{noisy-system}) provides a stability condition when all $a_{i}$ for $i=1,\ldots,n$ are identical. This condition implies that the unperturbed system with $\xi =0$ in (\ref{noisy-system}) is stable if and only if $\gamma > \cos(\frac{\pi}{n})$. For a fixed parameter $\beta$, the stability condition of the cyclic network is not affected when   the number of intermediate subsystems changes. However, the result of Theorem \ref{th-1} asserts that the lower bound of the performance measure (\ref{bound0}) increases when the network size  increases.  
We show that the performance measure (\ref{bound0}) is in order of $\Omega(n^2)$ when parameter $\beta$ is fixed \footnote{We employ  the big omega notation in order to generalize the concept of asymptotic lower bound in the same
way as $\mathcal{O}$ generalizes the concept of asymptotic upper bound. We adopt the following definition according to \cite{Kunth76}:
\begin{equation} 
f(n) = \Omega(g(n)) ~\Leftrightarrow~g(n) = \mathcal{O}(f(n)), \label{big-omega}
\end{equation}
where $\mathcal{O}$ represents the big O notation. In the left hand side of \eqref{big-omega}, the $\Omega$ notation implies that $f(n)$ grows at least of the order of $g(n)$.}. More explicitly, we obtain the following approximation
	\begin{equation}
		- \sum_{i=1}^{n} \frac{1}{2\mathbf{Re}\{{\lambda_i}(A)\}} \approx \left\{ \begin{array}{rcl}
  		{\frac{\tan \frac{\beta}{2}}{{2\mathfrak{c} \beta }}}n^2&,&\gamma < 1 \\
		& & \\
    		{\frac{1}{{4 \mathfrak{c}}}}n^2&,&\gamma = 1\\
    		& & \\
    		{\frac{\tanh \frac{ { \beta}}{2}}{{2 \mathfrak{c} {\beta}}}}n^2&,&\gamma > 1 \\
		\label{eq777}
		\end{array} \right..
	\end{equation}
According to this calculation, we conclude that $\mathcal H_2$-norm of the cyclic network scales with $\Omega(n)$.  

\begin{remark}
\label{remark-2}
In Theorem \ref{th-1}, it is shown that performance measure of network (\ref{noisy-system}) is always greater or equal to the right hand-side of (\ref{h2}). When network $\mathcal N $ consists of $n$ identical subsystems, i.e. $\mathfrak{a} := a _{1}=\ldots=a_{n}$ and $\mathfrak{c} := c_{1}=\ldots=c_{n}$,  the performance measure attains its minimum value among all networks with the same parameters $\mathfrak{a}$ and $\mathfrak{c}$. Furthermore, its value is given by the RHS of (\ref{h2}).
\end{remark}

In the following corollary, we analyze the performance of a cyclic network when the output  of the network only depends on the state of the last subsystem, $x_i$ (see Figure \ref{fig_2}).

\begin{corollary}\label{coro-2}
	Suppose that the following condition holds for the cyclic linear dynamical network \eqref{noisy-system}
	\begin{equation}
		\frac{\mathfrak{a}}{\mathfrak{c}} ~>~ \cos\left(\frac{\pi}{n}\right), \label{stab-condition}
	\end{equation}
where $\mathfrak{a} := a _{1}=\ldots=a_{n}$, $\mathfrak{c} := c_{1}=\ldots=c_{n}$, and the output of the system is defined by 
	\begin{equation}
		y~=~Cx~=~\left[\begin{array}{cccc}0 & \ldots & 0 & 1\end{array}\right]x.
		\label{out2}
	\end{equation}
Then, the steady-state output dispersion is bounded from above by 
	\begin{eqnarray}
		\RR\left (A;{Q}\right) &:=& \lim_{t \rightarrow \infty} \mathbf{E}[y(t)^2] \nonumber \\
		&\leq& \frac{1}{2(\mathfrak{a}-\mathfrak{c}\cos(\frac{\pi}{n}))}.
		\label{coro2result}
	\end{eqnarray}
\end{corollary}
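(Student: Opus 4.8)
The plan is to reduce the single-output bound in Corollary~\ref{coro-2} to the full-state result of Theorem~\ref{th-1} by a monotonicity argument on the Lyapunov equation. Write $Q_1 = \mathbf{diag}[0,\ldots,0,1]$ for the rank-one output weight in \eqref{out2} and $Q_2 = I_{n\times n}$. Since $0 \preceq Q_1 \preceq Q_2$, and since $A$ is Hurwitz under the stability condition \eqref{stab-condition} (as established in Theorem~\ref{th-1}), the map $Q \mapsto P(Q)$ defined by the Lyapunov equation $A^{\text T}P + PA + Q = 0$ is linear and monotone: $P(Q_1) \preceq P(Q_2)$, hence $\RR(A;Q_1) = \mathbf{Tr}(P(Q_1)) \le \mathbf{Tr}(P(Q_2)) = \RR(A;I_{n\times n})$. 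Wait --- this gives the full-state $\mathcal H_2$ norm as an upper bound, which by Theorem~\ref{th-1} is (at least in the normal/identical-subsystem case) of order $n^2$; that is far weaker than the $O(1)$-in-$n$ bound $\tfrac{1}{2(\mathfrak a - \mathfrak c\cos(\pi/n))}$ claimed in \eqref{coro2result}. So the crude monotonicity bound is not tight enough, and a more careful argument specific to the scalar output is needed.

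The sharper route is to apply the upper bound in Theorem~\ref{main} directly, but with the rank-one weight $Q_1$ rather than $I_{n\times n}$. The eigenvalues of $Q_1$ are $\lambda_1(Q_1) = \cdots = \lambda_{n-1}(Q_1) = 0$ and $\lambda_n(Q_1) = 1$. Therefore the upper bound $-\sum_{i=1}^n \lambda_i(Q_1)/(2\lambda_i(A_s))$ collapses to a single term: $-1/(2\lambda_1(A_s))$, i.e. $-1/(2\lambda_{\min}(A_s))$, since the one nonzero eigenvalue of $Q_1$ multiplies the most negative (largest in magnitude) eigenvalue of $A_s$ in the ordering used in Komaroff's inequality. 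So it remains to compute $\lambda_{\min}(A_s)$ for $A_s = \tfrac12(A + A^{\text T})$ with $A$ the cyclic matrix \eqref{state-matrix} specialized to $a_1 = \cdots = a_n = \mathfrak a$ and $c_1 = \cdots = c_n = \mathfrak c$. In that case $A = -\mathfrak a I + \mathfrak c\, \Pi$, where $\Pi$ is the (signed) cyclic shift matrix with entries $\Pi_{i+1,i} = 1$ for $i = 1,\ldots,n-1$ and $\Pi_{1,n} = -1$; then $A_s = -\mathfrak a I + \tfrac{\mathfrak c}{2}(\Pi + \Pi^{\text T})$, and $\Pi + \Pi^{\text T}$ is a symmetric circulant-type matrix whose eigenvalues are $2\cos\!\big(\tfrac{\pi(2k+1)}{n}\big)$ for $k = 0,\ldots,n-1$ (the sign on the corner entry shifts the usual $2\cos(2\pi k/n)$ circulant spectrum by a half-period). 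Hence the eigenvalues of $A_s$ are $-\mathfrak a + \mathfrak c\cos\!\big(\tfrac{\pi(2k+1)}{n}\big)$, whose minimum is $-\mathfrak a - \mathfrak c\cos(\tfrac{\pi}{n})$ --- but we need $\lambda_{\min}$ to pair with the positive eigenvalue of $Q_1$, giving $\RR(A;Q_1) \le -1/(2(-\mathfrak a - \mathfrak c\cos(\tfrac\pi n)))$, which is not \eqref{coro2result} either, as the sign inside the denominator is wrong.

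The resolution is that Komaroff's inequality pairs $\lambda_i(Q)$ (ascending) with $\lambda_i(A_s)$ in the \emph{reverse} order, so the nonzero eigenvalue $\lambda_n(Q_1) = 1$ multiplies $-1/(2\lambda_n(A_s))$ where $\lambda_n(A_s) = \lambda_{\max}(A_s) = -\mathfrak a + \mathfrak c\cos(\tfrac\pi n)$ (which is negative precisely when \eqref{stab-condition} holds, keeping the bound positive and finite). This yields exactly $\RR(A;Q_1) \le \dfrac{1}{2(\mathfrak a - \mathfrak c\cos(\tfrac\pi n))}$, matching \eqref{coro2result}. Concretely, the steps I would carry out in order are: (i) invoke \eqref{stab-condition} and Theorem~\ref{th-1} to get Hurwitzness of $A$ and finiteness of $\RR(A;Q)$; (ii) apply the right-hand inequality of \eqref{main-ineq} with $Q = Q_1$, noting that all but one eigenvalue of $Q_1$ vanish so the sum reduces to a single term involving $\lambda_{\max}(A_s)$; (iii) diagonalize $A_s$ by recognizing $A = -\mathfrak a I + \mathfrak c\,\Pi$ with $\Pi$ the signed cyclic shift, so that $A_s + \mathfrak a I = \tfrac{\mathfrak c}{2}(\Pi + \Pi^{\text T})$ is unitarily diagonalized by the Fourier-type basis, with eigenvalues $\mathfrak c\cos\!\big(\tfrac{(2k+1)\pi}{n}\big)$, $k=0,\ldots,n-1$; (iv) read off $\lambda_{\max}(A_s) = -\mathfrak a + \mathfrak c\cos(\tfrac\pi n)$ and substitute. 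The main obstacle is purely bookkeeping: getting the pairing direction in Komaroff's corollary right and tracking the half-period shift in the circulant spectrum coming from the single negative entry $-c_n$ in the corner of \eqref{state-matrix}; once those signs are pinned down the computation is immediate.
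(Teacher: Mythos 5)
Your final argument---applying the right-hand inequality of Theorem~\ref{main} with the rank-one weight $Q=C^{\text T}C=\mathbf{diag}[0,\ldots,0,1]$, so that the sum collapses to the single term $-1/(2\lambda_{\max}(A_s))$, and then computing $\lambda_{\max}(A_s)=-\mathfrak{a}+\mathfrak{c}\cos(\pi/n)$ from the negacyclic spectrum of $\Pi+\Pi^{\text T}$---is correct and is precisely the route the paper itself records as a ``direct consequence of Theorem~\ref{main}'' (its primary derivation instead writes the Gramian explicitly via normality of $A$ and bounds $\mathbf{Tr}(CPC^{\text T})\leq\lambda_{\max}(P)$ using $\|C\|_2=1$, which amounts to the same computation). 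Your resolution of the eigenvalue pairing is the right one, and your signs are in fact cleaner than the paper's, which misstates $\lambda_n(A_s)$ as $\mathfrak{a}-\mathfrak{c}\cos(\pi/n)$ rather than its negative.
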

\begin{proof}
The steady-state output dispersion is given by  
	\[\RR\left (A;{Q}\right)~=~\mathbf{Tr}(CPC^{\text T}),\] 
where $P$ is the unique solution of the Lyapunov equation 
	\begin{equation}
		AP~+~PA^{\text T}~+~I_{n\times n}~=~0.
		\label{contro}
	\end{equation} 
According to Theorem \ref{th-1}, our assumption \eqref{stab-condition} implies that all the eigenvalues of $A$ have strictly negative real parts. Therefore, the unique solution of (\ref{contro}) can be written in the following closed form
	\begin{equation}
		P~=~\int_{0}^{\infty} e^{A^{\text T}t} e^{At}dt.
		\label{ppp10}
	\end{equation}
The state matrix defined by \eqref{state-matrix} is normal, i.e., $A^{\text T}A=AA^{\text T}$. According to the spectral theorem, there exists a unitary matrix $V \in \mathbb C ^{n \times n}$ such that $A=V \Lambda V^{\h}$, where $\Lambda=\mathbf{diag}\left[\lambda_1(A),\cdots,\lambda_n(A)\right]$. We now consider the integrand of (\ref{ppp10})
	\begin{eqnarray*}
		P&=&\int_{0}^{\infty} e^{A^{\h}t} e^{At} dt \nonumber \\
		&=&\int_{0}^{\infty}Ve^{\Lambda^{\h}t}e^{\Lambda t}V^{\h}dt \nonumber \\
		&=&V\mathbf{diag}\left(\frac{1}{2\mathbf{Re}\{\lambda_1(A)\}},\cdots,\frac{1}{2\mathbf{Re}\{\lambda_n(A)\}}\right)V^{\h}.
	\end{eqnarray*}
Since $\|C\|_{2}=1$, it follows that
	\begin{eqnarray}
		\mathbf{Tr}(CPC^{\text T})&\leq& \max_i \lambda_i(P)\nonumber \\
		&=&\max_i \frac{1}{2\mathbf{Re}\{\lambda_i(A)\}} \nonumber \\
		&=&\frac{1}{2(\mathfrak{a}-\mathfrak{c}\cos(\frac{\pi}{n}))}. 
		\label{g}
		\end{eqnarray}
Furthermore, the result of this corollary can be also considered as a direct consequence of Theorem \ref{main}. Since according to (\ref{out2}) matrix $Q$ has only one nonzero eigenvalue, $\lambda_n(Q)=1$, and the maximum eigenvalue of $A_s$ is given by
\[\lambda_n(A_s)~=~\mathfrak{a}-\mathfrak{c}\cos(\frac{\pi}{n}),\]
therefore, the desired result is obtained by using (\ref{main-ineq}).
		
	\end{proof}

The result of Corollary \ref{coro-2} is interesting as it implies that the performance measure scales by $\mathcal O(n^2)$, when parameter $\beta$ is fixed. This follows from rewriting the RHS of inequality (\ref{coro2result}) in the following form
	\begin{equation*}
		\frac{1}{2(\mathfrak{a}-\mathfrak{c}\cos(\frac{\pi}{n}))}~=~\frac{1}{4 \mathfrak{c} \sin(\frac{\pi -\beta}{n})\sin(\frac{\pi +\beta}{n})},
	\end{equation*}
where $\gamma \leq 1$, similar argument can be used where $\gamma  > 1$.

\section{Simulations}
\label{sec:simulation}
In order to verify our theoretical results, we consider cyclic network (\ref{noisy-system}) with $n$ identical subsystems $\mathcal S_i$ for $i=1,\ldots,n$. The asymptotic scaling of $\mathcal H_2$-norm for this network is depicted in terms of the total network size and parameter $\beta$ in Figure \ref{fig_222}. In this case, since subsystems are identical, the $\mathcal H_2$-norm of cyclic network (\ref{noisy-system}) can be calculated by the square root of the RHS of (\ref{h2}).  These values are depicted by small red circles (\textcolor{red}{$\circ$}) versus the number of subsystems ($n$). Moreover, these values are compared asymptotically to their approximation given by the square root of (\ref{eq777}). It can be observed that the square root of (\ref{eq777}) tightly approximates the $\mathcal H_2$-norm of cyclic network (\ref{noisy-system}). Based on Theorem \ref{th-1} even when subsystems $S_i$'s are not identical, we can say that Figure \ref{fig_222} portrays the lower bounds on the $\mathcal H_2$-norm of cyclic network (\ref{noisy-system}). This implies that $\mathcal H_2$-norm of the cyclic network scales with $\Omega(n)$ as long as parameter $\beta$ is fixed.  
	\begin{figure}[t]
		\begin{center}
		\psfrag{a}[c][c]{\small{Network size ($n$)}}
		\psfrag{c}[c][c]{\small{ $\mathcal H_2$ norm }}
		\psfrag{C}[c][c]{$x_3$}
		\psfrag{d}[c][c]{\tiny{Decreasing $\beta$ }}
		\psfrag{r}[c][c]{~~\tiny{where $\gamma<1$}}
		\psfrag{e}[c][c]{~~~~~~~\tiny{Decreasing $\beta$ }}
		\psfrag{f}[c][c]{~~~~~~~~~~\tiny{where $\gamma>1$}}
		\psfrag{E}[c][c]{$\xi$}
		\includegraphics[width=0.5\textwidth]{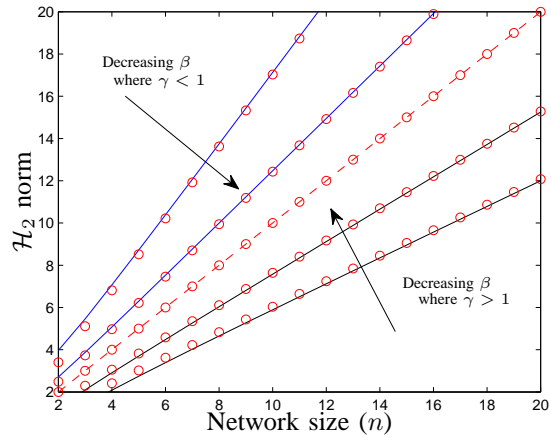}
		\end{center}
		\caption{{\small The lower bound in (\ref{noisy-system}), which is depicted by small red circles (\textcolor{red}{$\circ$}), is compared asymptotically to its approximation in (\ref{eq777}). It can be observed that (\ref{eq777}) tightly approximates the lower bound in (\ref{noisy-system}).}}
		\label{fig_222}
	\end{figure}

\section{Conclusion}
\label{sec:conclusion}
	In this paper, we have considered performance analysis of stable linear dynamical networks subject to external stochastic disturbances. We deployed the square of the $\mathcal H_2$--norm of the system from disturbance input to the output as a performance measure. Explicit formulae are derived for lower and upper bounds on the performance measure of the network. 
Then, we applied this result to an important class of cyclic linear networks and demonstrated that their performance measure scale quadratically with the network size. 
Finally, it is shown that when all subsystems are identical, the network attains the best achievable performance among all cyclic networks with the same secant criterion.

\begin{spacing}{}
\bibliographystyle{IEEEtran}
\bibliography{IEEEabrv,references-Milad-2014}
\end{spacing}

\end{document}